\documentclass{article}
\usepackage{spconf}
\usepackage{amsmath}
\usepackage{amsthm}
\usepackage{amsfonts}
\usepackage{amssymb}
\usepackage{mathtools}
\usepackage{graphicx}
\usepackage{color}
\usepackage{verbatim}
\usepackage{hyperref}
\usepackage{todonotes}
\usepackage{enumitem}
\usepackage{cite}
\usepackage{cancel}

\newcommand{\white}{\color{white}}

\newcommand{\Ff}{\mathcal{F}}
\newcommand{\I}{\mathcal{I}}

\newcommand{\wb}{\bold{w}}

\newcommand{\Xb}{\bold{X}}

\newcommand{\At}{\tilde{A}}
\newcommand{\Bt}{\tilde{B}}
\newcommand{\Ct}{\tilde{C}}

\newcommand{\Rt}{\tilde{R}}

\newcommand{\wbt}{\tilde{\bold{w}}}
\newcommand{\Wbt}{\tilde{\bold{W}}}
\newcommand{\Scal}{\mathcal{S}}
\newcommand{\St}{\bar{\mathcal{S}}}
\newcommand{\Z}{\mathbb{Z}}
\newcommand{\R}{\mathbb{R}}

\newcommand{\E}{\mathbb{E}}
\newcommand{\F}{\mathbb{F}}
\newcommand{\N}{\mathbb{N}}
\newcommand{\gb}{\bold{g}}
\newcommand{\Sb}{\bold{S}}

\newcommand{\Bb}{\bold{B}}
\newcommand{\Bbt}{\tilde{\bold{B}}}

\newcommand{\Ib}{\bold{I}}
\newcommand{\ab}{\bold{a}}
\newcommand{\abt}{\tilde{\bold{a}}}

\newcommand{\eb}{\bold{e}}

\newcommand{\Var}{\mathrm{Var}}

\newtheorem{Thm}{Theorem}[section]

\newtheorem{Prop}[Thm]{Proposition}
\newtheorem{Rmk}[Thm]{Remark}

\title{Approximate Weighted $CR$ Coded Matrix Multiplication}

\name{$\text{Neophytos Charalambides}^{\dagger}$, $\text{Mert Pilanci}^{\ddagger}$, $\text{Alfred O. Hero III}^{\dagger},$\thanks{This work was partially supported by grant ARO W911NF-15-1-0479.}}
\address{$\text{\white .}^\dagger$EECS Department University of Michigan, $\text{\white .}^\ddagger$EE Department Stanford University}

%
%
%
\begin{document}
%
\maketitle
\begin{abstract}
One of the most common, but at the same time expensive operations in linear algebra, is multiplying two matrices $A$ and $B$. With the rapid development of machine learning and increases in data volume, performing fast matrix intensive multiplications has become a major hurdle. Two different approaches to overcoming this issue are, 1) to approximate the product; and 2) to perform the multiplication distributively. A \textit{$CR$-multiplication} is an approximation where columns and rows of $A$ and $B$ are respectively sampled with replacement. In the distributed setting, multiple workers perform matrix multiplication subtasks in parallel. Some of the workers may be stragglers, meaning they do not complete their task in time. We present a novel \textit{approximate weighted $CR$ coded matrix multiplication} scheme, that achieves improved performance for distributed matrix multiplication.
\end{abstract}

\begin{keywords}{Randomized numerical linear algebra, approximation algorithms, coded computing, coding theory.}
\end{keywords}

\vspace{-3mm}
\section{Introduction}
\label{intro}

\vspace{-2mm}
Matrix multiplication is one of the key underlying operations used in applications and algorithms in domains such as numerical analysis, machine learning, network analysis and scientific computing. Frequently, this operation occurs thousands of times, making it a a bottleneck and an impediment in large scale computations.

One way to speed up matrix multiplication, is to perform the necessary computations in a distributed manner, where a network of workers perform certain subtasks in parallel. A common obstacle in such networks is the presence \textit{stragglers}: workers whose computed task may never be received, due to delay or outage \cite{LLPPR18}. Matrix-matrix multiplication has been studied in this context \cite{YMAA17,FJHDCG17,DFHJCG19,SHN19,YMAA20,YA20}. Though optimal, these methods require relatively high encoding and decoding complexities; which may introduce numerical instability, and pose restrictions on the way the matrices need to be partitioned; in order to complete the multiplication in a distributive environment.

The $CR$-multiplication scheme produces a low-rank approximate matrix product \cite{DK01,DKM06a,DKM06b}. It judiciously sub-samples pairs of columns and rows, respectively, from $A$ and $B$, to form the approximation $CR\approx AB$. By appropriately weighting $C$ and $R$, the number of sampled pairs can be significantly reduced, while attaining the same approximation.

In this paper we introduce a procedure for \textit{straggler-robust weighted $CR$-multiplication} in distributed computing environments. Instead of sampling pairs of individual columns and rows, we sample submatrices which are partitions of $A$ and $B$, across the columns and rows respectively.

The paper is organized as follows. In section \ref{CR_mult} we describe $CR$ matrix multiplication, and extend the approximation theory to sampling pairs of submatrices of $A,B$; instead of pairs of rows and columns of $A,B$. Then, we introduce weighted $CR$-multiplication, which is beneficial when the associated sampling distribution is non-uniform. Weighted $CR$-multiplication reduces the number of operations, while producing the same approximation. In section \ref{str_matr_mult} we describe the “straggler problem” and \textit{coded matrix multiplication} (CMM). We then show how gradient coding (GC) schemes can be used to devise \textit{weighted CMM} (WCMM) schemes, and also provide a WCMM scheme based on the ``MatDot'' CMM \cite{FJHDCG17,DFHJCG19}. Finally, in section \ref{exp_sect} we present experimental results.

Our contributions are: (i) we generalize $CR$ matrix multiplication to accommodate sampling pairs of submatrices, instead of vectors, (ii) we propose \textit{weighted} CMM schemes; derived from GC and MatDot CMM, (iii) we provide theory showing that weighted and unweighted block-wise $CR$-multiplication produce the same result, (iv) we incorporate of $CR$-multiplication into WCMM, (v) we demonstrate the power of the proposed scheme by experiments.

\vspace{-3mm}
\section{Block-wise $CR$-multiplication}
\label{CR_mult}

\vspace{-1.6mm}
Consider the two matrices $A\in\R^{L\times N}$ and $B\in\R^{N\times M}$, for which we want to approximate the product $AB$. It is known that the product may be approximated by sampling with replacement, where the row-column sampling probabilities are proportional to their Euclidean norms. That is, we sample with replacement $r$ pairs $(A^{(i)},B_{(i)})$ for $i\in\N_N\coloneqq\{1,\cdots,N\}$ and $r<N$ ($A^{(i)}$=$i^{th}$ column of $A$, and $B_{(i)}$=$i^{th}$ row of $B$), with probability
\vspace{-1.8mm}
$$ p_j\propto\|A^{(j)}\|_2^2\cdot\|B_{(j)}\|_2^2 \vspace{-1.8mm} $$
and sum a rescaling of the samples' outer-products:
\vspace{-1.8mm}
$$ AB \approx \frac{1}{r}\cdot \left(\sum_{j\in\Scal}\frac{1}{p_j}A^{(j)}B_{(j)}\right) = \sum_{j\in\Scal}\frac{A^{(j)}}{\sqrt{rp_j}}\cdot\frac{B_{(j)}}{\sqrt{rp_j}} \vspace{-1.8mm} $$
where $\Scal$ is the multiset consisting of indices (possibly repeated) of the sampled pairs, hence $|\Scal|=r$. We denote the corresponding ``compressed versions'' of the input matrices by $C\in\R^{L\times r}$ and $R\in\R^{r\times M}$ respectively. This approximation satisfies $\|AB-CR\|_F=O(\|A\|_F\|B\|_F/\sqrt{r})$. Further details on this algorithm may be found in \cite{DK01,DKM06a,DKM06b,Woo14,Mah16}.

It is desired to sample submatrices with potentially more than one row and show how the above algorithm can be so modified. First, we partition $A$ and $B$ into $K$ disjoint submatrices consisting of $\tau\coloneqq N/K$ columns and rows respectively, which we denote by $\At_i\in\R^{L\times \tau}$ and $\Bt_i\in\R^{\tau\times M}$. That is
\vspace{-1.8mm}
$$ A=\Big[\At_1 \ \cdots \ \At_K\Big] \quad \text{ and } \quad B=\Big[\Bt_1^T \ \cdots \ \Bt_K^T\Big]^T. \vspace{-1.8mm} $$

Next, we determine the appropriate sampling probabilities. For $z\in\N_K$ and $\Pr[z=j]=\Pi_j$, let $X\in\R^{L\times M}$ be a matrix random variable with $\Pr[X=\At_j\Bt_j/\Pi_j]=\Pi_j$. From here on, we will be considering the case where $t$ block pairs $(\At_i,\Bt_i)$ are sampled with replacement, for $t<K$.

For $\E[X]$ the statistical expectation, we have
\vspace{-1.8mm}
$$ \E[X] = \sum_{l=1}^K
\Pr[z=l]\cdot\frac{1}{\Pi_l}\At_l\Bt_l = \sum_{l=1}^K\At_l\Bt_l = AB \vspace{-1.8mm} $$
thus $X$ is an unbiased estimator. Furthermore, we have
\begin{align*}
\label{var_expr}
  \E\left[\|AB-X\|_F^2\right] &= \left(\sum_{l=1}^K\frac{\|\At_l\|_F^2\|\Bt_l\|_F^2}{\Pi_l}\right)-\|AB\|_F^2 \nonumber\\ &\eqqcolon f(\{\Pi_l\}_{l=1}^K) - \|AB\|_F^2 = \Var(X) \ . \vspace{-1.8mm}
\end{align*}
This also implies
\vspace{-1.8mm}
\begin{equation}
\label{expect_bound}
  \E\left[\|AB-X\|_F^2\right] \leq \|A\|_F^2\|B\|_F^2 \ . \vspace{-1.8mm}
\end{equation}

To define the optimal distribution $\{\Pi_i\}_{i=1}^K$, we minimize $f(\{\Pi_i\}_{i=1}^K)$ subject to $\sum_{i=1}^K\Pi_i=1$. Introduce the Lagrange multiplier $\lambda$:
$$ g(\{\Pi_i\}_{i=1}^K) \coloneqq f(\{\Pi_i\}_{i=1}^K) + \lambda\left(\sum_{i=1}^K\Pi_i-1\right) \vspace{-1.8mm} $$
for which
\vspace{-1.8mm}
$$ 0=\frac{\partial g}{\partial\Pi_i} = -\frac{1}{\Pi_i^2}\cdot\|\At_i\|_F^2\|\Bt_i\|_F^2 + \lambda $$
$$ \implies \quad \Pi_i \ = \ \|\At_i\|_F\|\Bt_i\|_F\big/\sqrt{\lambda} \ \propto \ \|\At_i\|_F\|\Bt_i\|_F \ $$
so it is natural to define the sampling distribution by
\vspace{-1.8mm}
\begin{equation}
\label{der_distr}  
  \Pi_i = \frac{\|\At_i\|_F\|\Bt_i\|_F}{\sum_{l=1}^n\|\At_l\|_F\|\Bt_l\|_F}\ .
\end{equation}

By the second derivative test; since $\partial^2g/\partial\Pi_i^2=\frac{2}{\Pi_i^3}\cdot\|\At_i\|_F^2\|\Bt_i\|_F^2>0$  for all $i\in\N_K$, the sampling probabilities \eqref{der_distr} are optimal. That is, they minimize the expectation of \eqref{expect_bound}, which is equal to the sum of the variances of all the elements of the matrix product.

To reduce the variance, we take $t$ independent copies of the estimator $X$, which we denote by $X_{\iota}$ for $\iota=1,\cdots,t$. Let $\Ct$ and $\Rt$ be the resulting sketches of $A$ and $B$ respectively, by averaging the $t$ samples which were drawn to obtain
\vspace{-1.8mm}
\begin{equation}
\label{appr_id}  
  Y=\Ct\Rt = \frac{1}{t}\sum_{\iota\in\St}\frac{\At_\iota\Bt_\iota}{\Pi_\iota} = 
\sum_{\iota\in\St}\frac{\At_\iota}{\sqrt{t\Pi_\iota}}\cdot\frac{\Bt_\iota}{\sqrt{t\Pi_\iota}}
\vspace{-1.8mm}
\end{equation}
for $\St$ the multiset of indices of the $t$ sampled pairs, and
\vspace{-2mm}
$$ \Ct=\frac{1}{\sqrt{t}}\begin{bmatrix}\At_{\St_1}\big/\sqrt{\Pi_{\St_1}}^{\white.} & \cdots & \At_{\St_t}\big/\sqrt{\Pi_{\St_t}}^{\white.}\end{bmatrix} \in \R^{L\times t\tau} $$
$$ \Rt=\frac{1}{\sqrt{t}}\begin{bmatrix} \Bt_{\St_1}^T\big/\sqrt{\Pi_{\St_1}}^{\white.} & \cdots & \Bt_{\St_t}^T\big/\sqrt{\Pi_{\St_t}}^{\white.} \end{bmatrix}^T  \in \R^{t\tau\times M} $$
hence the corresponding rescalings which take place in the block sampling scenario, are $1/\sqrt{t\Pi_\iota}$. Moreover
\vspace{-1.8mm}
\begin{equation}
\label{unb_expr} 
  \E[Y] = \E\Big[\sum_{\iota\in\St}X_{\iota}\Big] = \sum_{\iota\in\St}\E[X_{\iota}] = AB \qquad \vspace{-1.8mm}
\end{equation}
\vspace{-1.8mm}
$$ \Var(Y) = \Var\Big(\frac{1}{t}\sum_{\iota\in\St}X_{\iota}\Big) = \frac{1}{t}\Var(X)\leq\frac{1}{t}\|A\|_F^2\|B\|_F^2 \ . $$
\begin{Thm}
  The estimator $Y=\Ct\Rt$ is unbiased, while the sampling probabilities $\{\Pi_i\}_{i=1}^K$ minimize $\Var(Y)$, and the approximation satisfies $\|AB-\Ct\Rt\|_F^2=O\left(\|A\|_F^2\|B\|_F^2/t\right)$.
\end{Thm}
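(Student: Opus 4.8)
The plan is to assemble the three assertions from the computations already made in this section, using only the fact that, by construction in \eqref{appr_id}, $Y=\Ct\Rt$ is the empirical average $\frac1t\sum_{\iota\in\St}X_\iota$ of $t$ independent copies $X_1,\dots,X_t$ of the single-block estimator $X$. For unbiasedness, each $X_\iota$ has the distribution of $X$, for which we computed $\E[X]=\sum_{l=1}^K\Pi_l\cdot\Pi_l^{-1}\At_l\Bt_l=AB$, so by linearity of expectation $\E[Y]=\frac1t\sum_{\iota\in\St}\E[X_\iota]=AB$, which is exactly \eqref{unb_expr}.

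For optimality, I read $\Var(\cdot)$ of a matrix-valued random variable as $\sum_{i,j}\Var\big((\cdot)_{ij}\big)=\E\big[\|\,\cdot-\E[\cdot]\,\|_F^2\big]$, so that independence of the $X_\iota$ gives $\Var(Y)=t^{-2}\sum_{\iota}\Var(X_\iota)=\frac1t\Var(X)$, and we already have $\Var(X)=f(\{\Pi_l\}_{l=1}^K)-\|AB\|_F^2$. Since $\|AB\|_F^2$ does not depend on the sampling distribution, minimizing $\Var(Y)$ over the probability simplex $\{\Pi_i>0:\sum_i\Pi_i=1\}$ is equivalent to minimizing $f(\{\Pi_l\})=\sum_{l=1}^K\|\At_l\|_F^2\|\Bt_l\|_F^2/\Pi_l$. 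Here a little care is needed: $f$ is strictly convex on the positive orthant and tends to $+\infty$ as any $\Pi_l\to0^+$, hence its constrained minimum is attained at an interior critical point; the Lagrange stationarity condition $\partial g/\partial\Pi_i=0$ then produces the unique candidate $\Pi_i\propto\|\At_i\|_F\|\Bt_i\|_F$, namely \eqref{der_distr}, and $\partial^2 g/\partial\Pi_i^2>0$ confirms it is the global minimizer.

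For the error bound, substituting \eqref{der_distr} into $f$ collapses it to $\big(\sum_{l=1}^K\|\At_l\|_F\|\Bt_l\|_F\big)^2\le\big(\sum_l\|\At_l\|_F^2\big)\big(\sum_l\|\Bt_l\|_F^2\big)=\|A\|_F^2\|B\|_F^2$ by Cauchy--Schwarz (equivalently, just invoke \eqref{expect_bound}), so $\E\big[\|AB-Y\|_F^2\big]=\Var(Y)=\frac1t\Var(X)\le\frac1t\|A\|_F^2\|B\|_F^2$; this is precisely the claimed $O\!\left(\|A\|_F^2\|B\|_F^2/t\right)$ in expectation, and a high-probability version follows from Markov's inequality applied to the nonnegative quantity $\|AB-Y\|_F^2$.

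The steps are all routine algebra combined with facts established above, so I do not anticipate a serious obstacle. The one point genuinely deserving attention is the optimality argument: one must justify that the Lagrange critical point is the \emph{global} minimizer over the simplex (via strict convexity of $f$ and its blow-up at the boundary, rather than merely the second-derivative test), and one should fix the meaning of "$\Var(Y)$" for a matrix-valued variable precisely enough that the identity $\E[\|AB-Y\|_F^2]=\Var(Y)$ used throughout is unambiguous.
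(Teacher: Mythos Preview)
Your proposal is correct and follows essentially the same route as the paper: the first two claims are simply read off from the computations preceding the theorem, and the error bound is obtained from $\Var(Y)=\frac1t\Var(X)$ together with \eqref{expect_bound} (you invoke Cauchy--Schwarz/Markov where the paper laconically cites ``Jensen's inequality,'' but the content is the same). Your added remarks on strict convexity of $f$ and the precise meaning of $\Var(\cdot)$ for matrix-valued variables are welcome clarifications that the paper leaves implicit.
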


\begin{proof}
 We already established the first two claims. The third clain is a direct application of Jensen's inequality to \eqref{expect_bound}.
\end{proof}

By applying Markov's inequality, one can also get a concentration bound on the matrix product approximation.

We can define the sampling and compression in terms of a matrix $\Sb\in\R^{N\times r}$, using \eqref{appr_id}. First, we construct $S\in\R^{K\times t}$ by initializing it to $S=\bold{0}_{K\times t}$. Subsequently:
\begin{enumerate}[noitemsep,nolistsep]
  \item randomly sample independently with replacement based on $\{\Pi_i\}_{i=1}^K$; until $t$ \textit{distinct} blocks have been selected, and let $\St$ be the samples' index multiset
  \item assign $S_{_{\St_j,j}}=1/\sqrt{|\St|\Pi_j}$, for $j=1,\cdots, t$ ,
\end{enumerate}
and finally form $\Sb=S\otimes\Ib_{\tau}$. It is clear that
\vspace{-1.6mm}
$$ \Ct=A\cdot\Sb \ \text{ and } \ \Rt=\Sb^T\cdot B \ \ \implies \ \ AB\approx A\cdot(\Sb\Sb^T)\cdot B. $$

\vspace{-1.6mm}
Note that since the total number of samples of pairs we are considering has increased, we need to rescale by $\sqrt{|\St|\Pi_j}$, instead of $\sqrt{t\Pi_j}$.

\begin{Rmk}
The sampling can be done efficiently with an additional $O(1)$ additional storage space, by a modification of the pass-efficient $\textup{\textsf{SELECT}}$ algorithm form \cite{Mah16}.
\end{Rmk}

\vspace{-3mm}
\subsection{Weighted $CR$-multiplication}
\label{wht_CR_mult}

The above sampling procedure reduces the the size of each of the two matrices by a compression factor $\rho\coloneqq N/r$. This then reduces the overall operations of naive matrix multiplication from $O(LMN)$ to $O(LMr)$, i.e. it drops by a factor of $1/\rho$. We will take advantage of the potential non-uniformity of $\{\Pi_i\}_{i=1}^K$ to drop the operation count even more. The non-uniformity is also a consequence of potential outliers, which are common to machine leaning and real datasets.

For the \textit{weighted} variant of the proposed $\Sb$, we first construct $S_{\wb}\in\R^{K\times t}$ and $\wb$ as follows:
\begin{enumerate}[noitemsep,nolistsep]
  \item randomly sample independently with replacement based on $\{\Pi_i\}_{i=1}^K$ until $t$ \textit{distinct} blocks are drawn, and let $\St$ be the corresponding index multiset, and $\I$ the set of indices comprising $\St$, i.e. $\I\subsetneq\N_K$ has no repetitions and $|\I|=t$
  \item retain each sampled pair only once, and count how many times each pair was drawn --- these counts correspond to the entries of the \textit{weight vector} $\wb\in\N_0^{1\times K}$
  \item assign $(S_\wb)_{_{\I_j,j}}=\sqrt{\wb_{\I_j}}/\sqrt{|\St|\Pi_j}$, for $j=1,\cdots, t$, where each row of $S_{\wb}$ has up to one nonzero entry.
\end{enumerate}
We define the weighted compression matrix as $\Sb_{\wb}=S_{\wb}\otimes \Ib_\tau$, and the resulting sketches as $C_{\wb}=A\cdot\Sb_{\wb}$ and $R_{\wb}=\Sb_{\wb}^T\cdot B$.

One could sample a fixed number of times, as is done in most such algorithms. We describe the case where $t$ distinct pairs are sampled, to make the connection with distributed computations more natural; as was done in \cite{CPH20}.

\begin{Prop}
The resulting approximations from the algorithms using $\Sb$ and $\Sb_\wb$ respectively, are identical.
\end{Prop}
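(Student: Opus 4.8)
The plan is to show that the two algorithms compute literally the same matrix, by proving that the Gram matrices $\Sb\Sb^T$ and $\Sb_\wb\Sb_\wb^T$ coincide. In both cases the approximation is $\Ct\Rt = A\Sb\Sb^T B$, respectively $C_\wb R_\wb = A\Sb_\wb\Sb_\wb^T B$, and since $\Sb = S\otimes\Ib_\tau$, $\Sb_\wb = S_\wb\otimes\Ib_\tau$, the mixed-product property of the Kronecker product gives $\Sb\Sb^T = (SS^T)\otimes\Ib_\tau$ and $\Sb_\wb\Sb_\wb^T = (S_\wb S_\wb^T)\otimes\Ib_\tau$. So it suffices to establish the identity $SS^T = S_\wb S_\wb^T$ in $\R^{K\times K}$, and then read off the common value of $A(\,\cdot\,\otimes\Ib_\tau)B$.

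First I would record that, in each construction, every column of $S$ (resp. of $S_\wb$) has exactly one nonzero entry: in the unweighted case each column corresponds to one draw and its nonzero sits in the row of the block drawn, in the weighted case each of the $t$ columns carries the single block of $\I$ it is assigned. Consequently, for $i\neq i'$ no column can contribute to both row $i$ and row $i'$, so $(SS^T)_{i,i'} = \sum_\ell S_{i,\ell}S_{i',\ell} = 0$ and likewise for $S_\wb$; both Gram matrices are diagonal. It then remains to match the diagonals. Let $\wb_i$ be the number of times block $i$ was drawn in the (common) sampling step, so $\wb_i>0$ iff $i\in\I$ and $\sum_{i}\wb_i = |\St|$. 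In the unweighted scheme the columns whose nonzero lies in row $i$ are exactly the $\wb_i$ columns recording a draw of block $i$, each equal to $1/\sqrt{|\St|\Pi_i}$, whence $(SS^T)_{ii} = \wb_i/(|\St|\Pi_i)$. In the weighted scheme there is a single column with its nonzero in row $i$ when $i\in\I$, of magnitude $\sqrt{\wb_i}/\sqrt{|\St|\Pi_i}$, so $(S_\wb S_\wb^T)_{ii} = \wb_i/(|\St|\Pi_i)$; for $i\notin\I$ both quantities are $0$. Hence $SS^T = S_\wb S_\wb^T =: D = \mathrm{diag}\big(\wb_i/(|\St|\Pi_i)\big)_{i=1}^K$.

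Finally I would conclude by expanding through the block partition. Writing $A = [\At_1\ \cdots\ \At_K]$ and $B = [\Bt_1^T\ \cdots\ \Bt_K^T]^T$, one has $A(D\otimes\Ib_\tau)B = \sum_{i=1}^K D_{ii}\At_i\Bt_i$ for any diagonal $D$, so
$$ \Ct\Rt \;=\; A(D\otimes\Ib_\tau)B \;=\; C_\wb R_\wb \;=\; \frac{1}{|\St|}\sum_{i=1}^K\frac{\wb_i}{\Pi_i}\At_i\Bt_i \;=\; \frac{1}{|\St|}\sum_{\iota\in\St}\frac{\At_\iota\Bt_\iota}{\Pi_\iota}, $$
which is exactly the estimator \eqref{appr_id} (with $t$ replaced by the realized number of draws $|\St|$). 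In particular the two outputs are identical, which is the claim.

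I do not expect a genuine obstacle: the argument is essentially bookkeeping. The only point that needs care is aligning the two sampling descriptions so that "number of repetitions of block $i$" is the same quantity $\wb_i$ in both, and observing the key structural fact that $CR$ depends on the random sample only through the diagonal count matrix $D$ — so that the $\wb_i$ repeated rank-$\tau$ contributions produced by $\Sb$ are reproduced exactly by the one column of $\Sb_\wb$ whose squared magnitude is $\wb_i/(|\St|\Pi_i)$.
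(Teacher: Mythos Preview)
Your proof is correct and follows essentially the same approach as the paper's: both arguments amount to computing $\Sb\Sb^T$ and $\Sb_\wb\Sb_\wb^T$ and showing they coincide as the (block-)diagonal matrix with entries $\wb_i/(|\St|\Pi_i)$, the paper by regrouping the sum $\sum_{j\in\St}\eb_j\eb_j^T/(|\St|\Pi_j)$ into $\sum_{i\in\I}\wbt_i\,\eb_i\eb_i^T/(|\St|\Pi_i)$, and you by directly reading off the diagonal entries. Your treatment is in fact a bit more careful than the paper's, since you explicitly pass through the Kronecker factorization $\Sb\Sb^T=(SS^T)\otimes\Ib_\tau$ before matching diagonals at the $K\times K$ level.
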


\begin{proof}
By our constructions, matrix $\Sb$ is defined by $\St$; and $\Sb_\wb$ is defined by $\I$ and $\wb$. Let $\wbt=\wb_{|_\I}\in\Z_+^{1\times t}$, for which $|\St|=\|\wb\|_1=\|\wbt\|_1$. By a simple computation
\vspace{-2mm}
\begin{align*}
  C_{\wb}\cdot R_{\wb} &= A\cdot(\Sb_\wb\Sb_\wb^T)\cdot B = A\cdot\left(\sum_{i\in\I}\wbt_i\cdot\frac{\eb_i\cdot\eb_i^T}{|\St|\Pi_i}\right)\cdot B\\
  &= A\cdot\left(\sum_{j\in\St}\frac{\eb_j\cdot\eb_j^T}{|\St|}\Pi_j\right)\cdot B = A\cdot(\Sb\Sb^T)\cdot B = \Ct\cdot\Rt
\end{align*}
for $\eb_i$ the standard basis column vectors of length $N$.
\end{proof}

The benefit of applying $\Sb_\wb$ in place of $\Sb$, is that $\Sb_\wb$ has a factor $\|\wbt\|_1/\|\wbt\|_0\geq1$ fewer rows than $\Sb$, so all in all we would need to store a factor $(\|\wbt\|_1/\|\wbt\|_0)^2$ fewer many matrix entries after the sampling takes place. The number of operations needed to carry out the product of the sketched matrices also drops by the same factor.

\vspace{-3mm}
\section{Stragglers and Coded Multiplication}
\label{str_matr_mult}

\vspace{-2mm}
\subsection{Straggler Problem}
\label{str_problem}

Consider a single central server node that has at its disposal the matrices $A$ and $B$, such that it can distribute submatrices of $A$ and $B$ among $n$ workers to compute the estimate $Y$. The two matrices can be complex, but we focus on the real valued case. One way to compute $AB$ is to use the identity
\vspace{-1.8mm}
\begin{equation}
\label{outer_prod_expr}
  AB=\sum_{i=1}^NA^{(i)}B_{(i)}=\sum_{l=1}^K\At_l\Bt_l
\vspace{-1.8mm}
\end{equation}
which we have already exploited in section \ref{CR_mult}. This makes the process parallelizable, allowing parallel incorporation of the block-wise $CR$ matrix multiplication. The central server determines the appropriate way to distribute the sampled pairs with a certain level of redundancy, in order to recover $Y$.

The goal is to recover a weighted linear combination of the rank-$\tau$ outer products $\At_j\Bt_j$ of the distinct sampled pairs:
\vspace{-3mm}
$$ Y_{\wbt}\coloneqq \underbrace{(\wbt\otimes\Ib_L)}_{\Wbt\in\R^{L\times tL}}\cdot\underbrace{\Big[X_{\I_1}^T\ | \ \cdots\ | \ X_{\I_t}^T\Big]^T}_{\Xb\in\R^{tL\times M}} = \sum_{j=1}^t\wbt_\iota\cdot X_\iota = \sum_{j=1}^t Y_j \vspace{-3mm} $$
for $Y_j\coloneqq\wbt_j\cdot\At_{\I_j}\Bt_{\I_j}=\wbt_j\cdot X_{\I_j}\in\R^{L\times M}$.

In the distributed setting each worker node completes its task by sending back a \textit{weighted partial sum}, i.e. a weighted linear combination of a subset of the subtask products $\At_i\Bt_i$. Different types of failures can occur during the computation or the communication process. These failures are what we refer to as \textit{stragglers}, that are ignored by the main server: specifically, the server only receives $f\coloneqq n-s$ completed tasks. Here $s$ is the number of stragglers our scheme can tolerate and $f$ is referred to as the \textit{recovery threshold}. We denote by $\Ff\subsetneq\N_n$ the index set of the $f$ fastest workers who complete their task. Once \textit{any} set of $f$ tasks is received, the central server should be able to decode the received encoded subtasks, and therefore recover the approximation $Y_{\wbt}$.

We note that this kind of weighted multiplication can be utilized in other applications, such as covariance and Hessian estimation, as well as in rank tree tensors \cite{LH99,GNC18}.

\vspace{-3mm}
\subsection{Coded Matrix Multiplication}
\label{CMM}

Coded matrix multiplication (CMM) \cite{LSR17} is a principled framework for providing redundancy in centralized distributed computing networks. CMM guarantees recovery in the presence of stragglers, as long as the number of stragglers does not exceed an upper bound. Each worker is asked to perform a computation and encode the result, before sending it to the main server. We show how to recover the weighted sum of the rank-$\tau$ outer products; from the encoded results, which complements the weighted $CR$-multiplication.

First, we show how to devise WCMM schemes from GC schemes. The robustness characteristics match those of the GC schemes they are based off. Then, we propose another scheme based on the ``MatDot'' CMM scheme from \cite{FJHDCG17,DFHJCG19}.

\vspace{-3mm}
\subsection{Weighted CMM Schemes from Gradient Coding}
\label{1st_WCMM}

Gradient coding (GC), first proposed in \cite{TLDK17}, is a technique for straggler mitigation in distributed learning. It consists of an encoding matrix $\Bb\in\Sigma^{n\times t}$ and a decoding vector $\ab_{\Ff}\in\Sigma^n$ for each possible $\Ff$, which satisfy $\ab_{\Ff}^T\Bb=\bold{1}_{1\times t}$, for some alphabet $\Sigma$. We leverage \eqref{outer_prod_expr} in order to construct weighted CMM schemes from GC schemes such as \cite{HASH17,RTTD17,OGU19,CMH20}.

\begin{Thm}
Any GC scheme which accommodates multiplication with real numbers and constructs $\ab_{\Ff}$ online once $\Ff$ is known, can be turned into a weighted CMM.
\end{Thm}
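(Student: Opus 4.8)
The plan is to show that the weighted recovery $Y_{\wbt} = \sum_{j=1}^t \wbt_j X_{\I_j}$ can be realized by a GC scheme applied to rescaled block-products, where the rescaling absorbs both the weights $\wbt_j$ and the $CR$-sampling factors $1/(\|\St\|_1 \Pi_{\I_j})$. Recall that a GC scheme supplies an encoding matrix $\Bb \in \Sigma^{n\times t}$ and, for each admissible straggler pattern with fast set $\Ff$, a decoding vector $\ab_\Ff$ with $\ab_\Ff^T \Bb = \bold{1}_{1\times t}$. The key observation is that GC is designed to recover the \emph{sum} $\sum_{j=1}^t g_j$ of $t$ "gradient" blocks $g_j$ assigned to workers according to the support pattern of $\Bb$; here we simply set $g_j := Y_j = \wbt_j \At_{\I_j}\Bt_{\I_j}$, the $j$-th weighted rank-$\tau$ outer product of a distinct sampled pair.

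First I would make the substitution explicit: worker $i$ is assigned the subtasks $\{\At_{\I_j}\Bt_{\I_j} : \Bb_{ij}\neq 0\}$ (these are ordinary matrix products it can compute from the submatrices the server sends it), and it returns the weighted partial sum $\sum_{j} \Bb_{ij}\, \wbt_j\, \At_{\I_j}\Bt_{\I_j}$. Stacking the true blocks into $\Xb = [X_{\I_1}^T|\cdots|X_{\I_t}^T]^T$ and writing $W = \mathrm{diag}(\wbt_1,\dots,\wbt_t)\otimes \Ib_L$, the vector of worker responses is $(\Bb \otimes \Ib_L)\, W\, \Xb$. Upon receiving any admissible set $\Ff$ of $f$ responses, the server applies $\ab_\Ff \otimes \Ib_L$ (restricted to the coordinates in $\Ff$) to obtain $(\ab_\Ff^T \Bb \otimes \Ib_L)\, W\, \Xb = (\bold{1}_{1\times t}\otimes \Ib_L) W \Xb = \sum_{j=1}^t \wbt_j X_{\I_j} = Y_{\wbt}$, which is exactly the target. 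Then I would observe that $Y_{\wbt}$, together with the $CR$-scaling already folded into the $\At_{\I_j},\Bt_{\I_j}$ via $\Sb_\wb$ as in Section~\ref{wht_CR_mult}, equals the desired approximation $C_\wb R_\wb = \Ct\Rt \approx AB$ by the Proposition.

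Two hypotheses in the statement are exactly what makes the argument go through, and I would flag where each is used. "Accommodates multiplication with real numbers" (i.e.\ $\Sigma \supseteq \R$, or at least $\R$-linear combinations are permitted in the alphabet) is needed because the weights $\wbt_j$ and the $CR$-factors $1/\sqrt{\|\St\|_1\Pi_{\I_j}}$ are generically non-integer reals, so the entries of $W$ — and hence the coefficients in the worker encodings $\Bb_{ij}\wbt_j$ — must live in a ring where real scalars act; a GC scheme over $\F_2$ or a finite field would not suffice. "Constructs $\ab_\Ff$ online once $\Ff$ is known" is needed because the set $\I$ of sampled indices (and the weights $\wbt$) are random and only determined at runtime after the $CR$-sampling step, so the decoding vectors cannot be precomputed offline; the GC decoding must be reactive. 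The robustness — the number $s$ of tolerable stragglers, equivalently the recovery threshold $f = n-s$ — is inherited verbatim from the underlying GC scheme, since we have only relabeled its gradient blocks.

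The main obstacle, and the only non-formal point, is verifying that the weights can be pushed into the \emph{worker-side} encoding rather than requiring an extra post-processing step: one must check that for every GC scheme of the assumed type the support structure of $\Bb$ does not interact badly with a per-block diagonal rescaling — but since right-multiplying $\Bb$ by the diagonal matrix $\mathrm{diag}(\wbt)$ preserves its support and only scales columns, and $\ab_\Ff^T(\Bb\,\mathrm{diag}(\wbt)) = \bold{1}\,\mathrm{diag}(\wbt) = \wbt^T$ is precisely the row of coefficients we want, this is immediate. Hence the construction is uniform across the cited GC families \cite{HASH17,RTTD17,OGU19,CMH20}, and I would close by remarking that the encoding/decoding complexity and the partitioning constraints are exactly those of the chosen GC scheme, with the $CR$-compression strictly reducing the block count from $K$ to $t$.
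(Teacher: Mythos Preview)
Your argument is correct and is essentially the paper's own proof: both replace the GC encoding $\Bb$ by $\Bb\cdot\mathrm{diag}(\wbt)$ (tensored with $\Ib_L$), keep the decoding vector $\ab_\Ff$ unchanged, and then use the mixed-product property of the Kronecker product together with $\ab_\Ff^T\Bb=\bold{1}_{1\times t}$ to recover $(\wbt\otimes\Ib_L)\Xb=Y_{\wbt}$. One small clarification: the ``online'' hypothesis on $\ab_\Ff$ refers to reactivity to the straggler pattern $\Ff$, not to the randomness of $(\I,\wbt)$---the latter affects only the encoding $\Bbt$, which is fixed before any worker responds.
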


\begin{proof}
Considering any such scheme $(\ab_{\Ff},\Bb)$ for any non-straggler index set $\Ff$, observe that
\begin{align*}
  (\overbrace{\ab_{\Ff}^T\otimes\Ib_L}^{\abt_{\Ff}^T\in\Sigma^{L\times nL}})&\cdot(\overbrace{\Bb\cdot diag(\wbt)\otimes \Ib_L}^{\Bbt\in\Sigma^{nL\times tL}})\cdot\Xb = \\
  &= (\ab_{\Ff}^T\cdot\Bb\cdot diag(\wbt))\otimes(\Ib_L\cdot\Ib_L)\cdot\Xb\\
  &= \big((\bold{1}_{1\times t}\cdot diag(\wbt))\otimes\Ib_L\big)\cdot\Xb\\
  &= (\wb\otimes\Ib_L)\cdot\Xb = \sum_{j=1}^t\wbt_j\cdot X_j \ . \vspace{-2mm}
\end{align*}
Hence, $(\abt_{\Ff},\Bbt)$ is a weighted CMM scheme.
\end{proof}

\begin{Prop}
\label{prop_GC}
By compressing matrices $A,B$ by a factor of $\rho$ along the appropriate dimension, while not reducing the workload of the workers, we can now tolerate $\grave{s}=\rho(s+1)-1$ stragglers by modifying $(\abt_{\Ff},\Bbt)$, for $\rho$ such that $0<\grave{s}<n$.
\end{Prop}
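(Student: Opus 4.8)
The plan is to exploit the fact that, after compression by a factor of $\rho$, the original $K$ blocks are effectively replaced by a smaller collection of $t$ distinct sampled blocks, but the workers are not asked to do less work; this surplus computational capacity is exactly what we can convert into additional straggler tolerance. Concretely, I would first recall that the baseline GC-derived WCMM scheme $(\abt_{\Ff},\Bbt)$ from the previous theorem tolerates $s$ stragglers when each worker handles a workload proportional to $(s+1)/n$ of the $t$ subtasks $X_j = \At_{\I_j}\Bt_{\I_j}$, each of which has inner dimension $\tau$. After compressing $A$ along its columns and $B$ along its rows by the factor $\rho = N/r$, each surviving outer-product subtask $\At_{\I_j}\Bt_{\I_j}$ is computed from a $\tau$-column (resp.\ $\tau$-row) slab; if we instead keep each worker's workload fixed at the \emph{uncompressed} level, a single worker can now process roughly $\rho$ times as many compressed subtasks as before.

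The key steps in order: (i) fix the per-worker workload at its pre-compression value and observe that this lets each worker store and multiply $\rho$ times more blocks; (ii) reorganize the $t$ distinct sampled block-pairs into $t' = \lceil t/\rho\rceil$ \emph{super-blocks}, each a concatenation of up to $\rho$ of the original sampled pairs, so that the target $Y_{\wbt}=\sum_{j=1}^t Y_j$ is a linear combination of partial sums over these super-blocks; (iii) apply the same GC scheme $(\ab_{\Ff},\Bb)$ but now at the granularity of super-blocks with the per-worker assignment unchanged — equivalently, run the construction of the previous theorem with $t$ replaced by $t'$ and the encoding matrix $\Bb \in \Sigma^{n\times t'}$; (iv) verify via the same Kronecker-product identity $\abt_{\Ff}^T\Bbt\Xb = (\wb\otimes\Ib_L)\Xb$ that decoding still returns $Y_{\wbt}$ exactly; (v) count the tolerated stragglers: since the effective number of subtasks dropped by a factor $\rho$ while the per-worker capacity stayed fixed, the replication factor — hence $s+1$ — scales up by $\rho$, giving $\grave s + 1 = \rho(s+1)$, i.e.\ $\grave s = \rho(s+1)-1$, valid precisely when $0 < \grave s < n$.

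I would present (i)--(iii) as the substance and relegate (iv) to a one-line reference to the computation in the proof of the preceding theorem (it is the identical Kronecker manipulation, only with $t'$ in place of $t$ and $\wbt$ restricted/aggregated to the super-block level). The straggler count in (v) should be framed the way GC schemes usually are: a scheme with recovery threshold $f = n - s$ on $t$ tasks has each task replicated $n - f + 1 = s+1$ times; holding the total stored fraction per worker fixed while the number of distinct tasks shrinks by $\rho$ multiplies the achievable replication by $\rho$, so $\grave s = \rho(s+1) - 1$.

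The main obstacle I anticipate is the \emph{integrality and load-balancing} bookkeeping: $\rho$, $t$, $n$, and $s$ need not divide one another, so the clean statement $\grave s = \rho(s+1)-1$ really holds under a divisibility assumption (or up to floors/ceilings), and one must be careful that grouping sampled pairs into super-blocks of size $\rho$ does not create a worker whose assigned super-blocks overrun the fixed workload budget — this is where an implicit ``assume $\rho \mid t$ and the GC scheme's assignment is cyclic/balanced'' hypothesis is doing quiet work. A secondary subtlety is that the weights $\wbt$ must be carried correctly through the regrouping: within a super-block the constituent $X_{\I_j}$ carry distinct weights $\wbt_j$, so the $diag(\wbt)$ factor in $\Bbt$ must be blocked conformally with the super-block partition rather than applied to $t'$ scalars — stating this explicitly is what makes step (iv) go through verbatim.
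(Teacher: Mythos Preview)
Your argument is sound and lands on the same replication count, but the paper's route is more direct and avoids your super-block detour. Rather than regrouping the $t$ sampled pairs into $t'=t/\rho$ super-blocks and re-instantiating a GC scheme of shape $n\times t'$, the paper keeps the new encoding matrix $\grave{\Bb}$ the \emph{same shape} $n\times t$ as $\Bb$ and argues purely on its support pattern: compression shrinks each block by $1/\rho$, so at fixed workload each row of $\grave{\Bb}$ acquires $\rho$ times as many nonzeros, $\|\grave{\Bb}_{(i)}\|_0=\rho\,\|\Bb_{(i)}\|_0$; for cyclic and similarly balanced GC constructions this forces the column weights up by the same factor, $\|\grave{\Bb}^{(j)}\|_0=\rho\,\|\Bb^{(j)}\|_0=\rho(s+1)$, and since the straggler tolerance equals column weight minus one, $\grave s=\rho(s+1)-1$. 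The paper then makes this concrete by noting that the cyclic scheme now works modulo $\rho(s+1)$ rather than modulo $(s+1)$. What your repackaging buys is a cleaner parallel with the MatDot statement (where the improvement is likewise phrased as $t\mapsto t/\rho$), at the cost of having to thread the per-block weights $\wbt_j$ through the super-block aggregation --- a bookkeeping issue you correctly flag but which the paper's row/column-support argument never encounters, since it leaves the block granularity untouched. Your divisibility and load-balancing caveat is well taken; the paper is equally informal on that point, appealing to ``the construction of most GC schemes'' rather than stating an exact balance hypothesis.
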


By reducing the respective dimensions of $A$ and $B$ by $\rho$, we have a new encoding matrix $\grave{\Bb}$ of the same size as $\Bb$, whose columns correspond to blocks of $(1/\rho)$-times the size of what was previously considered, hence the workers can now be allocated $\rho$-times as many blocks, i.e. $\|\grave{\Bb}_{(i)}\|_0=\rho\cdot\|\Bb_{(i)}\|_0$. By the construction of most GC schemes, this would result in $\rho$-times as many workers being allocated the same subset of blocks; compared to the initial scheme. That is, $\grave{s}+1=\|\grave{\Bb}^{(j)}\|_0=\rho\cdot\|\Bb^{(j)}\|_0=\rho\cdot(s+1)$, thus $\grave{s}=\rho\cdot(s+1)-1$. In essence, considering the GC from \cite{CMH20}, we now work with congruence classes $\bmod(\rho(s+1))$ as opposed to $\bmod(s+1)$, which implies we require fewer workers to respond. Analogous arguments hold for other GC schemes.

\vspace{-2mm}
\subsection{A Second Weighted CMM Scheme}
\label{2nd_WCMM}

Another scheme we present is a byproduct of the MatDot CMM scheme \cite{FJHDCG17,DFHJCG19} that exploits \eqref{outer_prod_expr}. In the MatDot scheme, an evaluation encoding polynomial of the submatrices $\At_i$ and $\Bt_i$ takes place for $p_{A}(x)=\sum_{j=1}^{t}\At_{j}x^{j-1}$ and $p_{B}(x)=\sum_{j=1}^{t}\Bt_{j}x^{t-j}$;
over arbitrary distinct elements $x_1,\cdots,x_n$ of a finite field $\F_q$ for $q>n$. The $i^{th}$ worker receives the encodings $p_A(x_i)$ and $p_B(x_i)$, i.e. the evaluation of the encoding polynomials at the evaluation point corresponding to the worker. Each worker is requested to communicate the computation $C(x_i)=p_A(x_i)\cdot p_B(x_i)$, which is a polynomial of degree $2(t-2)$. The sum of all the outer-products is the coefficient of $x^{t-1}$ of the polynomial $p_A(x)\cdot p_B(x)$. Once any $2t-1$ evaluations of the polynomial $C(x)$ on distinct points are received, polynomial interpolation of Reed-Solomon decoding can be applied in order to retrieve the product $AB$.

To incorporate the weights into the MatDot scheme, we use the polynomials $\tilde{p}_{A}(x)=\sum_{j=1}^{t}\sqrt{\wbt_j}\cdot\At_{j}x^{j-1}$ and $\tilde{p}_{B}(x)=\sum_{j=1}^{t}\sqrt{\wbt_j}\cdot\Bt_{j}x^{t-j}$ for the encoding, and the workers carry out the computations $C_{\wbt}(x_i)=\tilde{p}_A(x_i)\cdot \tilde{p}_B(x_i)$. The decoding step remains same.

\begin{Prop}
\label{prop_matdot}
Under the same assumptions of Proposition \ref{prop_GC}, the recovery threshold of our second WCMM drops from $2t-1$ to $2\grave{t}-1=2(t/\rho)-1$, for $\rho\mid t$. 
\end{Prop}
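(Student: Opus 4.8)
The plan is to show that compressing $A$ and $B$ by a factor $\rho$ (with $\rho \mid t$) along the contracted dimension replaces the $t$ outer-product blocks $\{\At_j\Bt_j\}_{j=1}^t$ by $\grave{t} = t/\rho$ larger blocks, each of which is itself a sum of $\rho$ of the original rank-$\tau$ products. Since the MatDot recovery threshold depends only on the \emph{number} of terms being packed into the encoding polynomials — not their individual sizes — rerunning the MatDot construction with $\grave t$ summands instead of $t$ immediately yields the stated threshold.

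First I would set up the compression carefully. Starting from \eqref{outer_prod_expr}, after the weighted $CR$-sampling we must recover $Y_{\wbt} = \sum_{j=1}^t \wbt_j \At_{\I_j}\Bt_{\I_j}$. Under Proposition \ref{prop_GC}'s assumptions we do not reduce the per-worker workload; instead each worker absorbs $\rho$ times as many blocks. Concretely, partition the index set $\{1,\dots,t\}$ into $\grave t = t/\rho$ consecutive groups $G_1,\dots,G_{\grave t}$ of size $\rho$, and define the merged encoding submatrices $\grave{\At}_k \in \R^{L\times \rho\tau}$ and $\grave{\Bt}_k \in \R^{\rho\tau\times M}$ by horizontally (resp. vertically) stacking $\{\sqrt{\wbt_j}\At_{\I_j} : j\in G_k\}$ (resp. the $\Bt$-blocks). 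Then $\grave{\At}_k\grave{\Bt}_k = \sum_{j\in G_k}\wbt_j \At_{\I_j}\Bt_{\I_j}$, so $Y_{\wbt} = \sum_{k=1}^{\grave t}\grave{\At}_k\grave{\Bt}_k$ — exactly the form \eqref{outer_prod_expr} with $\grave t$ terms.

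Next I would invoke the MatDot construction verbatim on these $\grave t$ merged blocks: set $\tilde p_A(x) = \sum_{k=1}^{\grave t}\grave{\At}_k x^{k-1}$ and $\tilde p_B(x) = \sum_{k=1}^{\grave t}\grave{\Bt}_k x^{\grave t - k}$, hand worker $i$ the evaluations $\tilde p_A(x_i), \tilde p_B(x_i)$ (each of dimension matching $\rho$ of the old blocks, consistent with ``not reducing the workload''), and have it return $C_{\wbt}(x_i) = \tilde p_A(x_i)\tilde p_B(x_i)$, a polynomial of degree $2(\grave t - 1)$. The coefficient of $x^{\grave t - 1}$ is $\sum_{k=1}^{\grave t}\grave{\At}_k\grave{\Bt}_k = Y_{\wbt}$, and $2\grave t - 1 = 2(t/\rho) - 1$ evaluations on distinct points suffice for Reed–Solomon decoding, which establishes the claim; one only needs $q > n$ and enough distinct $x_i$, which are inherited hypotheses.

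The main obstacle — really the only nontrivial point — is bookkeeping the dimensions and verifying that the ``same assumptions as Proposition \ref{prop_GC}'' genuinely permit this merging: namely that keeping the worker workload fixed while shrinking the contracted dimension by $\rho$ is precisely what lets each polynomial evaluation carry a $\rho$-fold larger block, so that the per-worker communication and computation are unchanged while the number of polynomial terms drops to $t/\rho$. I would also remark that $\rho \mid t$ is needed only so the groups $G_k$ have equal size $\rho$ (equivalently, so all merged blocks have identical dimensions, which MatDot requires); a short comment handles the case where divisibility fails by zero-padding, but that is outside the stated hypothesis. Everything else is the original MatDot correctness argument applied to $\grave t$ in place of $t$.
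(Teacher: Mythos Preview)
Your proposal is correct and follows essentially the same route as the paper. The paper's own justification is a two-sentence sketch---``once the sampling takes place we only deal with a total of $\grave t=t/\rho$ block pairs \dots the polynomials corresponding to each worker are now defined by $\rho$-many weighted `compressed' pairs''---and your argument is precisely a fleshed-out version of this: you make the grouping of the $t$ weighted pairs into $\grave t$ merged blocks explicit, verify that $Y_{\wbt}=\sum_{k=1}^{\grave t}\grave{\At}_k\grave{\Bt}_k$, and then invoke the MatDot degree count on $\grave t$ terms to get the $2\grave t-1$ threshold.
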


Considering $A,B$, once the sampling takes place we only deal with a total of $\grave{t}=t/\rho$ block pairs. Equivalently, the polynomials corresponding to each worker are now defined by $\rho$-many weighted  ``compressed'' pairs. Thus, we now only need to wait for $2\grave{t}-1$ responses to perform the interpolation.

\section{Experiment}
\label{exp_sect}

\vspace{-2mm}
\subsection{Minimum Variance of Frobenius Error}
\label{min_var_experiment}

We compare our weighted $CR$ approximation, to a $CR$ approximation with uniform sampling. We construct random matrices $A,B$ with $L=260$, $N=9600$, $M=280$; with non-uniform distribution $\{\Pi_i\}_{i=1}^K$ and $\|A\|_F^2\|B\|_F^2=O(10^{11})$. The minimum benefit of our sampling approach, occurs when $\{\Pi_i\}_{i=1}^K$ is close to uniform. We ran ten different instances for compression factors $\rho=K/t$ between $2$ and $16$. We kept $K=480$, $\tau=20$ fixed and varied $t$. On the plot we indicate the average approximation error $\|AB-C_{\wb}R_{\wb}\|_F^2$ along with the variance, and the corresponding error and variance for the uniform sampling approach. It is evident that our scheme has minimal variance over all values of $\rho$, and the error increases slightly as this factor decreases.
\begin{figure}[h]
  \centering
  \label{comp_times}
    \includegraphics[scale=.11]{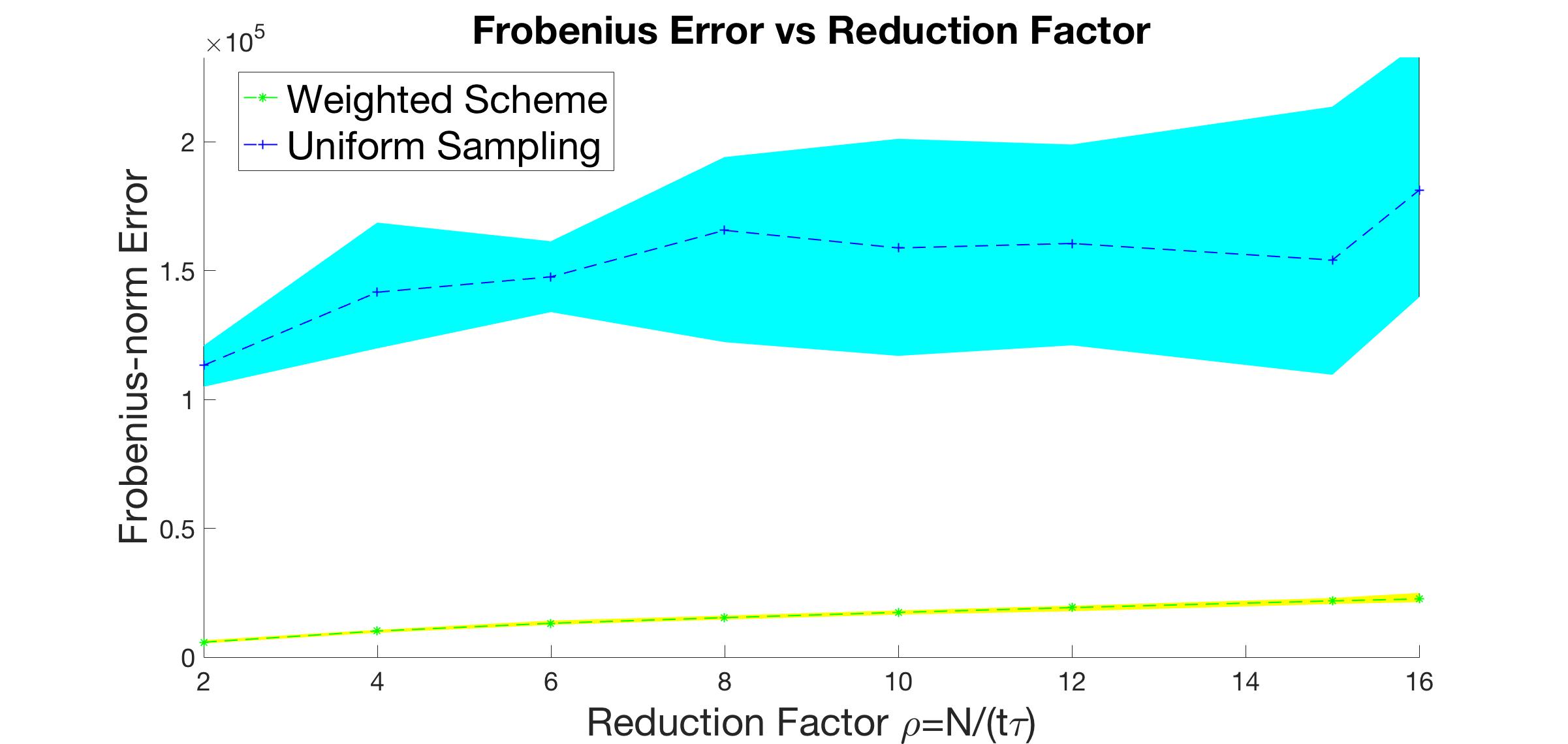}
    \caption{Average errors and variance, over ten simulations}
\end{figure}

\vspace{-3mm}
\subsection{Straggler Tolerance with AWS Job-Times}

We carried out the $CR$ approximation with identical parameters as in section \ref{min_var_experiment}, except for $N=10^4$ and $K=500$. We implemented the scheme of section \ref{1st_WCMM}, with $n=500$ and $s=19$, with worker completion times taken from 500 AWS-servers completing a job \cite{BP19}. For a compression factor $\rho=20$ we were able to compute the approximation in 10\% the time required by the corresponding exact recovery scheme, as we tolerated $\grave{s}=399$ stragglers. The approximation had a relative error of $\frac{\|AB-C_{\wb}R_{\wb}\|_F^2}{\|A\|_F^2\|B\|_F^2}=8.26\times 10^{-7}$. The total time the exact method needed to accommodate 19 stragglers, corresponded to a reduction of $\rho=2$, for which our scheme had a relative error of $1.92\times 10^{-7}$.



\bibliographystyle{unsrt}
\bibliography{refs.bib}

\end{document}